\definecolor{darkblue}{rgb}{0.0,0,0.5} 
\definecolor{byzantium}{rgb}{0.44, 0.16, 0.39}
\newcommand{\lpeg}{LPEG}
\newcommand{\GL}{G}
\newcommand{\GP}{G}
\newcommand{\pe}{\Xe}
\newcommand{\re}{r}
\newcommand{\PFtwo}{C_n}
\newcommand{\PFG}{C_G}
\newcommand{\copyfunc}{copy}
\newcommand{\D}{D}
\newcommand{\nfree}{p}
\newcommand{\B}{B}
\begin{document}

\title{Linear Parsing Expression Grammars: An Equivalent Class of DFAs}


\title{Linear Parsing Expression Grammars}

\author{Nariyoshi Chida\inst{1} \and Kimio Kuramitsu\inst{2}}

\tocauthor{Nariyoshi Chida and Kimio Kuramitsu}

\institute{
Yokohama National University, Japan
\email{nariyoshi-chida-pg@ynu.jp}
\and 
Yokohama National University, Japan
\email{kimio@ynu.ac.jp}
}
                        
\maketitle \setcounter{footnote}{0}

\begin{abstract}
PEGs were formalized by Ford in 2004, and have several pragmatic operators (such as ordered choice and unlimited lookahead) for better expressing modern programming language syntax. Since these operators are not explicitly defined in the classic formal language theory, it is significant and still challenging to argue PEGs' expressiveness in the context of formal language theory.
Since PEGs are relatively new, there are several unsolved problems.
One of the problems is revealing a subclass of PEGs that is equivalent to DFAs.
This allows application of some techniques from the theory of regular grammar to PEGs.
In this paper, we define Linear PEGs (LPEGs), a subclass of PEGs that is equivalent to DFAs. Surprisingly, LPEGs are formalized by only excluding some patterns of recursive nonterminal in PEGs, and include the full set of prioritized choice, unlimited lookahead, and greedy repetition, which are characteristic of PEGs. Although the conversion judgement of parsing expressions into DFAs is undecidable in general, the formalism of LPEGs allows for a syntactical judgement of parsing expressions.
\end{abstract}

\keywords{Parsing expression grammars, Boolean finite automata, Packrat parsing}


\if0
	変換でT(e.*)としてはいけない理由
	T(e.*)だとaA /b  のとき aA | !(aA) b となり T(aA.*) <- これLPEGじゃないし変換できない
	となるから
\fi

\section{Introduction}

{\em Deterministic finite automata} (DFAs) are a simple and fundamental theory in the classic formal language, which allows pattern matching on the input without backtracking. 
This positive aspect is applied to the implementation of many regular expression engines such as Google RE2\cite{re2} and grep leading to significantly improved performance.

Similarly, the DFA nature is used for faster parsing. 
For example, a partial conversion of {\em context-free grammars} (CFGs) into DFAs is studied with ANTLR3/4 by Parr {\it et al.}\cite{Parr:2011:LFA:1993498.1993548}\cite{Parr:2014:ALP:2714064.2660202}.
In this study, Parr {\it et al.} achieve better performance of a parser based on CFG by using the conversion.
Concretely, the parser decides a nonterminal that should be expanded by using the DFA.
That is,  DFA conversions remove backtracking while parsing.

In this way, DFAs are used for faster parsing. 
To the best of our knowledge, however, DFAs are not used for parsing a {\em parsing expression grammar} (PEG)\cite{POPL04_PEG} yet.
PEGs are a relatively new and popular foundation for describing syntax, formalized by Ford in 2004.
PEGs look very similar to some of the EBNFs or CFG-based grammar specifications, but differ significantly in that they have {\em unlimited lookahead} with syntactic predicates and deterministic behaviors with greedy repetition and prioritized choice. Due to these extended operators,  PEGs can recognize highly nested languages such as  $\{a^n$ $b^n$ $c^n$ $|$ $n$ $>$ $0\}$, which is not possible in a CFG. 

These extended operators raise an interesting and open question on the connection to the formal language theory. 
In particular, we have expected that a partial DFA conversion brings better performance benefits to the PEG-based parser generation as well as Parr {\it et al.}.
However, parsing expressions are obviously more expressive than DFAs, due to recursion which does not appear in regular expressions. 
Therefore, we require a subclass of PEGs that is equivalent to DFAs for applying DFA techniques to PEGs.

The main contribution of this paper is that we reveal a subclass of PEGs that is equivalent to DFAs.
We formalize the subclass as {\em linear parsing expression grammars} (LPEGs).
Surprisingly, LPEGs are formalized by excluding only some patterns of recursive nonterminal in PEGs, and include the full set of prioritized choice, unlimited lookahead, and greedy repetition, which are unique to PEGs. 
Furthermore, the formalism of LPEGs allows a partial conversion of a PEG into DFAs.
Since converting into DFAs can eliminate backtracking, the partial conversion would lead to further optimization of the parser generator. 

The rest of this paper proceeds as follows. 
Section 2 describes the formalism of LPEGs and shows the relationship between LPEGs and PEGs. 
Section 3 shows a regularity of LPEGs.
Section 4 briefly reviews related work. 
Section 5 is the conclusion. 

\section{Linear PEG}
In this section, we describe the formalism of {\em linear parsing expression grammars} (LPEGs).
LPEGs are a subclass of PEGs equivalent to DFAs, and LPEGs are formalized by excluding patterns of recursive nonterminals that are followed by expressions.
By the exclusion, the syntax of an LPEG is limited to right-linear.
Thus, we can simply consider an LPEG as a PEG where the syntax is right-linear.

To begin with, we describe PEG operators in Section \ref{sec:pegop}.
Then, we show the formalism of LPEGs in Section \ref{sec:lpegdef}.
Finally, we describe language properties in Section \ref{sec:lang}.

\subsection{PEG operators}
\label{sec:pegop}
Table \ref{table:PEGsl} shows the summary of PEG operators used throughout this paper.

\begin{table}[hbt]
\begin{center}
\caption{PEG Operators} \label{table:PEGsl}
\begin{tabular}{llll} \hline
PEG  & Type & Prec. & Description\\ \hline
\verb|' '| & Primary & 5 & Matches text\\
$[ ]$ & Primary & 5 & Matches character class \\
$\Pany$ & Primary & 5 & Any character\\
$A$ & Primary & 5 & Non-terminal application\\
$( {\pe} )$ & Primary & 5 & Grouping\\
${\pe}\Popt$ & Unary suffix & 4 & Option\\
${\pe}\Pzero$ & Unary suffix & 4 & Zero-or-more repetitions\\
${\pe}\Pone$ & Unary suffix & 4 & One-or-more repetitions\\
$\Pand{\pe}$ & Unary prefix & 3 & And-predicate\\
$\Pnot{\pe}$ & Unary prefix & 3 & Not-predicate\\
${\pe}_1 {\pe}_2$ & Binary & 2 & Sequence\\
${\pe}_1 \Por {\pe}_2$ & Binary & 1 & Prioritized Choice\\ \hline
\end{tabular}
\end{center}
\end{table}

The string 'abc' exactly matches the same input, while [abc] matches one of these terminals.
The $\Pany$ operator matches any single terminal. 
The $\pe\Popt$, $\pe\Pzero$, and $\pe\Pone$ expressions behave as in common regular expressions, except that they are greedy and match until the longest position. 
The $\pe_1\;\pe_2$ attempts two expressions $\pe_1$ and $\pe_2$ sequentially, backtracking the starting position if either expression fails.  
The choice $\pe_1\Por \pe_2$ first attempts $\pe_1$ and then attempts $\pe_2$ if $\pe_1$ fails. 
The expression $\Pand\pe$ attempts $\pe$ without any terminal consuming. 
The expression $\Pnot\pe$ fails if $\pe$ succeeds, but succeeds if $\pe$ fails. 

We consider the any character $\Pany$ expression to be a choice of all single terminals ($a\Por b\Por ...\Por c$) in $\Sigma$. 
As long as any special cases are not noted, we treat the any character as a syntax sugar of such a terminal choice.

Likewise, many convenient notations used in PEGs such as character class, option, one or more repetition, and and-predicate are treated as syntax sugars:

\begin{equation}
\begin{array}{llll}
[abc]    & =    &  a\Por b\Por c & \mbox{character class} \\
\pe\Popt   & =   & \pe\Por \Pempty  & \mbox{option} \\
\pe^{\Pone}    & =    &  \pe \pe^{\Pzero} & \mbox{one or more repetition} \\
\Pand\pe   & =   & \Pnot\Pnot\pe  & \mbox{and-predicate} \\ 
\end{array}
\notag
\end{equation}

Furthermore, we can eliminate zero or more repetition from a PEG by using a new nonterminal.
\begin{eqnarray*}
	\pe\Pzero & = & A \leftarrow e A \Por \epsilon \\
\end{eqnarray*}

\subsection{Definition of LPEGs}
\label{sec:lpegdef}

\begin{definition}
\label{def:LPEG}
A linear parsing expression grammar ({\lpeg}) is defined by a 4-tuple ${\GL} = (N_{\GL}, \Sigma, P_{\GL}, \pe_s)$, where $N_{\GL}$ is a finite set of nonterminals, $\Sigma$ is a finite set of terminals, $P_{\GL}$ is a finite set of production rules, and ${\pe}_s$ is a linear parsing expression termed the start expression.
A linear parsing expression $\pe$ is a parsing expression with the syntax according to BNF shown in Fig. \ref{fig:lpegsyntax}.
$\nfree$ in Fig. \ref{fig:lpegsyntax} is a nonterminal-free parsing expression (n-free parsing expression).
An n-free parsing expression $\nfree$ is a parsing expression such that the expression doesn't contain nonterminals.
Each rule in $P_{\GL}$ is a mapping from a nonterminal $A \in N_{\GL}$ to a linear parsing expression ${\pe}$.
We write $P_{\GL}(A)$ to denote an associated expression ${\pe}$ such that $A \leftarrow {\pe} \in P_{\GL}$.

\begin{figure}[htb]
\begin{equation}
\begin{array}{llll}
{\pe} &::=& \nfree \\
  &|&     \nfree\ A \\
  &|&     \nfree\ \pe \\
  &|&	    {\pe}\Por{\pe} \\
  &|&    \Pnot {\pe}\ {\pe} \\
\nfree    & ::=    &  \Pempty& \mbox{empty} \\
   & |   & a & \mbox{character} \\ 
   & |   & \Pany & \mbox{any character}\\
   & |   & \nfree\ \nfree& \mbox{sequence}\\
   & |   & \nfree \Por \nfree& \mbox{prioritized choice} \\
   & |   & \nfree\Pzero& \mbox{zero or more repetition} \\
   & |   & \Pnot \nfree& \mbox{not-predicate} \\
\end{array}
\notag
\end{equation}
\caption{Syntax of a linear parsing expression}
\label{fig:lpegsyntax}
\end{figure}
\end{definition}

We show two examples of an LPEG and an example of a PEG but not an LPEG.

\begin{example}
${\GL} = (\{A,B\}, \{a,b,c\}, \{A \leftarrow a A \Por b B \Por c, B \leftarrow a B \Por b A \Por c\}, A)$ is an LPEG.
\end{example}
\begin{example}
${\GL} = (\{A\}, \{a,b\}, \{A \leftarrow \Pnot(aA) aA \Por b\}, A)$ is an LPEG.
\end{example}
\begin{example}
${\GL} = (\{A,B\}, \{a,b\}, \{A \leftarrow a A a \Por B\Pzero, B \leftarrow a B \Por b\}, A)$ is not an LPEG.
Note that $a A a$ and $B\Pzero$ are not derived from the above syntax.
\end{example}

All subsequent use of the unqualified term ``grammar'' refers specifically to linear parsing expression grammars as defined here,
and the unqualified term ``expression'' refers to linear parsing expressions.
We use the variables $a,b,c,d \in \Sigma$, $A,B \in N_{\GL}$, $w, x, y, z \in \Sigma^*$, and $\pe$ for linear parsing expressions.

\subsection{Language Properties}
\label{sec:lang}
In this section, we define a language recognized by LPEGs.
We use a function $consume$ to define the language.
The definition of the function $consume$ is as follows.
\begin{itemize}
\item $consume(\pe,x) = y$ denotes that the expression $\pe$ succeeds on the input string $x$ and consumes $y$.
\item $consume(\pe,x) = fail$ denotes that the expression $\pe$ fails on the input string $x$.
\end{itemize}

\begin{definition}
\label{def:lpeglang1}
Let ${\GL} = (N_{\GL}, \Sigma, P_{\GL}, \pe_s)$ be an LPEG, let $\pe$ be an expression.
The language generated by $\pe$ is a set of all strings over $\Sigma$:

$L_{\GL}({\pe}) = \{ x \mid  x \in \Sigma^{*}, \mbox{$consume$($\pe$,$x$) = $y$}\}$.
\end{definition}

\begin{definition}
\label{def:lpeglang2}
Let ${\GL} = (N_{\GL}, \Sigma, P_{\GL}, \pe_s)$ be an LPEG.
The language generated by a grammar ${\GL}$ is a set of all strings over $\Sigma$:

$L({\GL}) = L_{{\GL}}(\pe_s)$.
\end{definition}

We define that two parsing expressions are {\em equivalent} as follows:
\begin{definition}
Let ${\GP}_1 = (N_{{\GP}_1},\Sigma,P_{{\GP}_1},{\pe_s}_1)$ and ${\GP}_2 = (N_{{\GP}_2},\Sigma,P_{{\GP}_2},{\pe_s}_2)$.
Two parsing expressions ${\pe_s}_1$ and ${\pe_s}_2$ are equivalent if $consume({\pe_s}_1,x) = consume({\pe_s}_2,x)$ for any input string $x \in \Sigma^{*}$.
\end{definition}

If a parsing expression $\pe_1$ and $\pe_2$ are equivalent, we can rewrite $\pe_1$ as $\pe_2$, and vice versa since the languages are same.
\begin{theorem}
Let ${\GP}_1 = (N_{{\GP}_1},\Sigma,P_{{\GP}_1},{\pe_s}_1)$ and ${\GP}_2 = (N_{{\GP}_2},\Sigma,P_{{\GP}_2},{\pe_s}_2)$.
If the parsing expression ${\pe_s}_1$ and ${\pe_s}_2$ are equivalent, then $L({\GP}_1) = L({\GP}_2)$.
\end{theorem}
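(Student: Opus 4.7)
The proof is essentially a definition-chasing argument, so the plan is to unwind Definitions \ref{def:lpeglang1} and \ref{def:lpeglang2} and then apply the equivalence hypothesis directly. I would establish set equality by proving the two inclusions $L(\GP_1) \subseteq L(\GP_2)$ and $L(\GP_2) \subseteq L(\GP_1)$ separately, though only one direction needs to be written out since the other is symmetric.

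For the forward inclusion, I would take an arbitrary $x \in L(\GP_1)$. By Definition \ref{def:lpeglang2}, this means $x \in L_{\GP_1}({\pe_s}_1)$, and by Definition \ref{def:lpeglang1} there exists some $y \in \Sigma^{*}$ such that $consume({\pe_s}_1, x) = y$. Invoking the hypothesis that ${\pe_s}_1$ and ${\pe_s}_2$ are equivalent, we get $consume({\pe_s}_2, x) = consume({\pe_s}_1, x) = y$, so $x \in L_{\GP_2}({\pe_s}_2) = L(\GP_2)$ by the same two definitions applied in reverse. The opposite inclusion follows by swapping the roles of $\GP_1$ and $\GP_2$, which is legitimate because the equivalence relation is symmetric by construction (equality of $consume$ values is symmetric).

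There is no real obstacle here: the statement is almost tautological given how $L(\GP)$ is defined purely in terms of the start expression via $consume$, and equivalence is defined precisely as pointwise equality of $consume$. The only minor subtlety worth flagging in the write-up is that the definition of $L_{\GP}(\pe)$ only requires $consume(\pe, x)$ to be some string $y$ (as opposed to $fail$), so one should be careful to argue in terms of ``there exists $y$'' rather than fixing a particular $y$ up front; the equivalence hypothesis then transports the witness $y$ from one grammar to the other. This is a matter of phrasing rather than mathematical difficulty.
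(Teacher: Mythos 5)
Your proof is correct and matches the paper's approach: the paper simply declares the result ``trivial, from the definition of the language,'' and your write-up is exactly the definition-chasing argument that justification gestures at, with the two inclusions and the existential quantification over $y$ handled properly.
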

\begin{proof}
Trivial, from the definition of the language.
\end{proof}

\section{Regularity}
In this section, we prove that LPEGs are a class that is equivalent to DFAs.
To prove this, we show that for any LPEG ${\GL}$ there exists a DFA ${\D}$ such that $L(\GL) = L(\D)$ and for any DFA $\D$ there exists an LPEG $\GL$ such that $L(\D) = L(\GL)$.
We show the former in section \ref{sec:fltd} and the latter in section \ref{sec:fdtl}.

\subsection{From LPEGs to DFAs}
\label{sec:lpegdfa}
\label{sec:fltd}

We show that for any LPEG ${\GL}$ there exists a DFA $D$ such that $L({\GL}) = L(D)$.
This can be proved by translating LPEGs into {\em boolean finite automata}(BFAs)\cite{BRZOZOWSKI198019}.

A BFA is a generalized {\em nondeterministic finite automaton} (NFA).
The difference between NFAs and BFAs is a representation of a state under transition.
The state under transition on NFAs can be represented as a boolean function consisting of logical OR and boolean variables.
On the other hand, the state under transition on BFAs can be represented as a boolean function consisting of logical AND, logical OR, logical NOT, constant values (i.e. $true$ and $false$), and boolean variables.

There are two reasons for using BFAs.
One is to handle not-predicates.
We can represent these predicates as a boolean function by using logical AND and logical NOT.
Another reason is that BFAs can be converted into DFAs (\cite{BRZOZOWSKI198019}, Theorem 2).
Thus, LPEGs can be converted into DFAs if we can convert LPEGs into BFAs.

In the next section we describe basic definitions and notations of BFAs.
In order to make the conversion easier, sets of the accepting states of BFAs are divided into two sets in the definitions.
In Section \ref{sec:lpegbfa}, we show that LPEGs can be converted into BFAs.

\subsubsection{Boolean Finite Automata}
\begin{definition}
A {\em boolean finite automaton} (BFA) is a 6-tuple $\B$=$(Q,\Sigma,\delta,f^0,F,P)$.
$Q = \{q_1,q_2,...,q_n\}$ is a finite set of states.
$\Sigma$ is a finite set of terminals.
$\delta : Q \times \Sigma \rightarrow V_Q$ is a transition function that maps a state and a terminal into a boolean function of boolean variables that correspond to the states $q_1, q_2, ..., q_n$ in the set of boolean functions $V_Q$.
$f^0 \in V_Q$ is an initial boolean function.
$F$ is a finite set of accepting states.
$P$ is a finite set of accepting states for lookaheads. 
We also use $q_i$ as a boolean variable that corresponds to a state $q_i \in Q$.
\end{definition}

Let $f$ be a boolean function in $V_Q$.
The transition function $\delta$ is extended to $V_Q \times \Sigma^*$ as follows:
\begin{eqnarray*}
	\delta(f,\epsilon) &=& f\\
	\delta(f,a) &=& f(\delta(q_1,a),...,\delta(q_n,a))\\
	\delta(f,aw) &=& \delta(\delta(f,a),w)
\end{eqnarray*}

A language accepted by a BFA is defined as follows:
\begin{definition}
Let $B$ be a BFA and $x \in \Sigma^*$.
$x \in L(\B)$ iff $\delta(f^0,x)(c_1,...,c_n) = true$, where $c_i = true$ if $q_i \in F\cup P$, otherwise $false$.
\end{definition}

Then, we define a function $consume$ for a BFA $B$ in the same way as the function $consume$ for an LPEG.
We use the function to show the equivalence between the language of an LPEG and a BFA converted from the LPEG.
The function $consume$ is defined as follows.
\begin{itemize}
\item $consume(B,w) = x$ denotes that $eval_P(\delta(eval_F(\delta(f^0,x),F),y),P) = true$ for an input string $w=xyz$.
\item $consume(B,w) = fail$ if there is no such $x$.
\end{itemize}
In  the definition, we use two evaluation functions, $eval_F$ and $eval_P$.
These functions are used for representing the behaviors of the predicate operators  of LPEGs on BFAs.
Specifically, $eval_F$ takes a boolean function $f$ and a set of accepting state $F$ and returns a boolean function $f'$ that replaced boolean variables $q_i\in F$ in $f$ with $true$.
For example, let $f=q_0\land(q_1\lor q_2)$ and $F=\{q_1\}$.
Then $eval_F(f,F) = q_0\land(true\lor q_2) = q_0\land q_2$.
$eval_P$ takes a boolean function $f$ and a set of accepting state of not-predicates $P$ and returns a boolean value that is a result of a replacement of a boolean variables $q_i \in P$ in $f$ with $true$, otherwise $false$.
For example, let $f=q_0\land\overline{q_1}$ and $P=\{q_0\}$.
Then, $eval_P(f,P) = true\land\overline{false} = true$.
To understand the intuition of the functions, we show an example in Fig. \ref{fig:eval_int}.
In the figure, we show an LPEG such that the start expression $\pe_S = \Pand(ab)a$ and a BFA $B$ that the language is equivalent to the language of the LPEG.
The circles, double circles and arrows denote the states, accepting states and transitions of the BFA, respectively.
The arrow labeled $\&$ denotes the and-predicate operator in LPEGs.
\begin{figure}[H]
  \begin{center} 
    \includegraphics[width=7.0cm]{./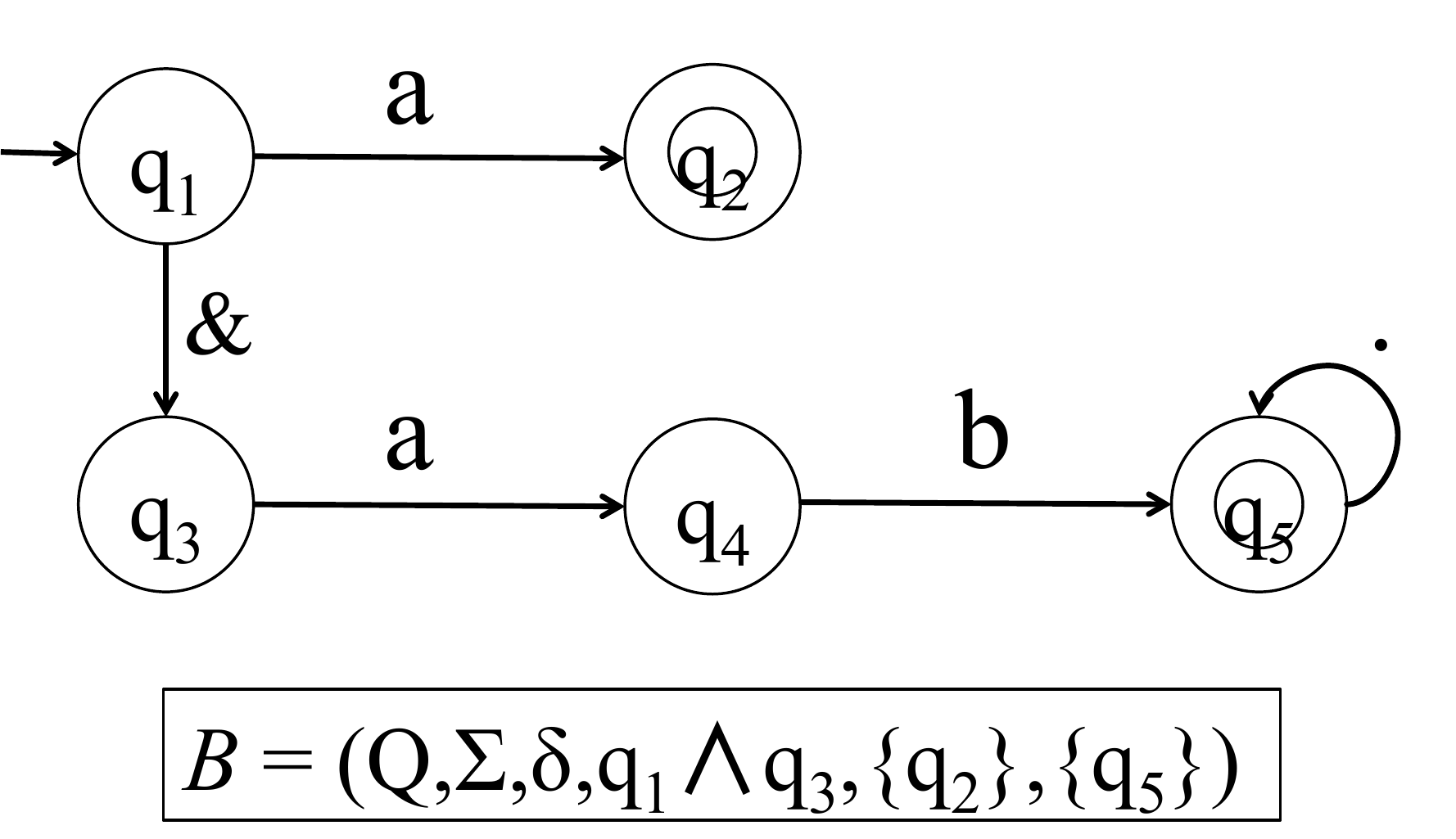}
    \caption{The BFA that the language is equivalent to the language of the LPEG} 
    \label{fig:eval_int}
  \end{center}
\end{figure}
Let $ab$ be an input string for the BFA $B$.
Then, $consume(B,ab)$ is evaluated as follows.
\begin{eqnarray*}
	consume(B,ab) &=& eval_P(\delta(eval_F(\delta(q_1\land q_3,a),\{q_2\}),b),\{q_5\})\\
	&=&  eval_P(\delta(eval_F(q_2\land q_4,\{q_2\}),b),\{q_5\})\\
	&=&  eval_P(\delta(true\land q_4,b),\{q_5\})\\
	&=&  eval_P(q_5,\{q_5\})\\
	&=& true
\end{eqnarray*}
This result shows that the BFA $B$ accepts the input string $ab$.
In this way, $eval_F$ evaluates the operators other than predicate operators and $eval_P$ evaluates the predicate operators.

We define that an LPEG $G$ and a BFA $B$ are {\em equivalent} as follows:
\begin{definition}
Let $G$ and $B$ be an LPEG such that the start expression is $\pe_S$ and a BFA, respectively.
$G$ and $B$ are equivalent if $consume(\pe_S,x) = consume(B,x)$ for any input string $x \in \Sigma*$.
\end{definition}

\begin{theorem}
\label{theo:lbeq}
Let $G$ and $B$ be an LPEG and a BFA, respectively.
If $G$ and $B$ are equivalent, then $L(G) = L(B)$.
\end{theorem}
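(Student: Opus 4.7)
The plan is to prove the set equality by unfolding both language definitions to statements about the $consume$ predicate and then applying the equivalence hypothesis. By Definitions \ref{def:lpeglang1} and \ref{def:lpeglang2}, $x \in L(G)$ is equivalent to $consume(\pe_s, x) \ne \fail$; by the hypothesis $consume(\pe_s, x) = consume(B, x)$, this is in turn equivalent to $consume(B, x) \ne \fail$. So the theorem reduces to the auxiliary claim that the direct definition of $L(B)$ --- namely, $\delta(f^0, x)(c_1, \ldots, c_n) = true$ with $c_i = true$ iff $q_i \in F \cup P$ --- coincides with $\{\, x : consume(B, x) \ne \fail\,\}$.

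To establish this coincidence, the key observation is that the two-stage substitution $eval_P \circ eval_F$ appearing in the consume definition realizes exactly the substitution used in the direct definition of $L(B)$: $eval_F(\cdot, F)$ sends each $q_i \in F$ to $true$, and $eval_P(\cdot, P)$ then sends each $q_i \in P$ to $true$ and the remaining variables to $false$. For the forward direction of the auxiliary claim, I would instantiate the decomposition $x = w_1 w_2 w_3$ from the definition of $consume(B, \cdot)$ as $w_1 = x$ and $w_2 = w_3 = \epsilon$, yielding $eval_P(eval_F(\delta(f^0, x), F), P) = \delta(f^0, x)(c_1, \ldots, c_n)$, which equals $true$ precisely when $x \in L(B)$.

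The step I expect to require the most care is the reverse direction, where a witnessing decomposition $x = w_1 w_2 w_3$ with $w_2 w_3 \ne \epsilon$ must be lifted to full-string acceptance $\delta(f^0, x)$ under the $L(B)$ assignment. I would handle this with a short monotonicity argument: under the assignment that sends all variables in $F \cup P$ to $true$, extending the input by one more character cannot flip $true$ to $false$, since the accepting variables are already pinned to $true$ and the extended transition function is built from the per-state transitions by composition of boolean connectives that preserve this assignment at $true$. With that monotonicity lemma in hand, any witnessing prefix $w_1$ can be extended to the whole string $x$, so the two characterizations of $L(B)$ agree, and the double reduction above yields $L(G) = L(B)$.
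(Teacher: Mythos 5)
Your reduction of the theorem to the auxiliary claim that $L(B)$ coincides with $\{x : consume(B,x) \ne \fail\}$ is the right way to make the statement precise, and your forward direction is correct: taking the trivial decomposition $w_2 = w_3 = \epsilon$, the composition $eval_P(eval_F(\cdot,F),P)$ realizes exactly the assignment $c_i = true$ iff $q_i \in F \cup P$. You have in fact gone further than the paper, whose entire proof is the sentence ``we can prove this by case analysis of the result of the function $consume$'' and which never confronts the mismatch between the prefix-matching semantics of $consume$ and the full-match acceptance condition defining $L(B)$.

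The gap is in your reverse direction: the monotonicity lemma is false for an arbitrary BFA. Extending the input by one character \emph{can} flip acceptance from $true$ to $false$, because the extended transition is $\delta(f,a) = f(\delta(q_1,a),\ldots,\delta(q_n,a))$ --- each variable is first replaced by its successor function, and only then is the accepting assignment applied; an accepting state is not ``pinned to $true$'' unless it carries self-loops on every terminal. Concretely, take $B = T(a) = (\{s,t\},\Sigma,\{((s,a),t)\},s,\{t\},\{\})$ and input $ab$: the decomposition $w_1 = a$, $w_2 = \epsilon$, $w_3 = b$ witnesses $consume(B,ab) = a \ne \fail$, yet $\delta(s,ab) = \delta(t,b) = false$, so $ab \notin L(B)$. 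Thus $\{x : consume(B,x) \ne \fail\}$ strictly contains $L(B)$ here, and pairing this $B$ with the LPEG whose start expression is $a$ (the two are equivalent in the paper's sense) gives $L(G) = a\Sigma^{*} \ne \{a\} = L(B)$ --- so the theorem as literally stated fails, and no proof of it as stated can succeed. The statement is true only for the BFAs to which the paper actually applies it, namely those of the form $T(\pe_s\Pany\Pzero)$, where the trailing $\Pany\Pzero$ (together with the self-loops $((t,\Pany),t)$ added on $F\cup P$ in the not-predicate case) is precisely what makes every witnessing prefix extendable to the whole input. Your monotonicity argument therefore needs to be restricted to, and justified for, that class of BFAs rather than asserted for all of them.
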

\begin{proof}
We can prove this by case analysis of the result of the function $consume$.
\if0
これは、case analysisで正しいんじゃないの
xをinput stringとすると、consume(e,x) = yのときとconsume(e,x)=failのときで場合分けしてホイっとできるでしょ
\fi
\end{proof}

\subsubsection{From LPEGs to BFAs}
\label{sec:lpegbfa}
We show a conversion from an LPEG into a BFA.
The conversion consists of four steps.
In the first step, we rewrite a prioritized choice with an alternation in regular expressions.
In the second step, we add new production rules for nonterminals in not-predicates to an LPEG.
We apply these two steps in order to simplify the conversion.
In the third step, we convert a modified LPEG into a BFA.
However, the BFA is incomplete in this step, since the conversion handles nonterminals as temporary boolean variables to avoid an infinite loop by recursions.
In the final step, we replace the temporary boolean variables in a BFA with initial functions of the nonterminals.

First, we rewrite a prioritized choice $\Por$ with an alternation $\REor$ in regular expressions.
That is, we rewrite $\pe_1 \Por \pe_2$ with $\pe_1 \Por \Pnot\pe_1 \pe_2$.
We show an example of the rewriting as follows:
\begin{example}
${\GL} = (\{A,B\},\{a,b,c,d\},\{A \leftarrow \Pnot(a A \Por b B \Por c ) d, B \leftarrow a B \Por b\},A)$ is an LPEG.
Then, we rewrite the LPEG as $(\{A,B\},\{a,b,c,d\},\{A \leftarrow \Pnot(a A \REor \Pnot(a A) b B \REor \Pnot(b B) c ) d, B \leftarrow a B \REor \Pnot(a B) b\},A)$
\end{example}

By the modification, we can apply Thompson's construction\cite{Thompson:1968:PTR:363347.363387} to the construction of a BFA of a prioritized choice.
We show that the language is the same before and after the rewriting in Theorem \ref{theo:orderunorder}.

\begin{theorem}
\label{theo:orderunorder}
$\pe_1\Por\pe_2$ is equivalent to $\pe_1 \mid \Pnot\pe_1\pe_2$.
\end{theorem}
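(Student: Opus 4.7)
The plan is to prove this by induction-free case analysis on the behaviour of $\pe_1$ on the input, appealing directly to the definition of $consume$ for the two operators involved. The core observation is that prefixing the right branch with $\Pnot\pe_1$ makes the two alternatives mutually exclusive, so the unordered alternation $\REor$ behaves deterministically exactly like the ordered choice $\Por$.

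First I would fix an arbitrary input $x \in \Sigma^*$ and split on the value of $consume(\pe_1,x)$. In the succeeding case, where $consume(\pe_1,x) = y$ for some $y$, the left-hand side $consume(\pe_1 \Por \pe_2, x)$ immediately equals $y$ by the semantics of prioritized choice. For the right-hand side, the first branch $\pe_1$ yields $y$, while the second branch $\Pnot\pe_1\,\pe_2$ must fail: since $\pe_1$ succeeds on $x$, the not-predicate $\Pnot\pe_1$ fails, and any failure at the head of a sequence causes the whole sequence to fail. Therefore only one alternative of $\REor$ succeeds and it consumes $y$, giving the same result as the left-hand side.

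In the failing case, $consume(\pe_1,x) = fail$, so the left-hand side reduces to $consume(\pe_2,x)$. On the right-hand side, the first branch of $\REor$ fails, while in the second branch $\Pnot\pe_1$ succeeds and consumes the empty string (because $\pe_1$ fails on $x$), so the sequence $\Pnot\pe_1\,\pe_2$ reduces to $consume(\pe_2,x)$. Thus both sides agree again, and combining the two cases yields $consume(\pe_1 \Por \pe_2, x) = consume(\pe_1 \REor \Pnot\pe_1\,\pe_2, x)$ for every $x$, which is the required equivalence.

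The only delicate point, and what I expect to be the main obstacle to state cleanly, is justifying that $\REor$ applied to two mutually exclusive expressions admits a well-defined $consume$ value, rather than inheriting any nondeterminism of the regex-style alternation. This has to be argued from the construction: because $\Pnot\pe_1$ and $\pe_1$ cannot both succeed on the same prefix, at most one branch of $\REor$ ever yields a match, so $consume$ is unambiguous on the rewritten form. Once that is noted, the case analysis above closes the proof directly.
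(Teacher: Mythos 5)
Your proof is correct and follows essentially the same route as the paper: a direct case analysis on the result of $consume(\pe_1,\cdot)$, showing that the $\Pnot\pe_1$ guard makes the second branch fail exactly when $\pe_1$ succeeds and transparent when it fails. The paper splits into four cases (also distinguishing the outcome of $\pe_2$), but its cases collapse pairwise into your two, so the argument is the same; your added remark on the mutual exclusivity making $\REor$ unambiguous is a reasonable observation the paper leaves implicit.
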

\begin{proof}
We can prove this by case analysis.
\begin{description}	
\item[1.]\mbox{Case $consume(\pe_1,w) = x$ and $consume(\pe_2,w) = x'$}\\
	By the semantics of a prioritized choice, $consume(\pe_1\Por\pe_2,w) = x$.
	Since $consume(\pe_1,w) = x$ and $consume(\Pnot\pe_1,w) = fail$, $consume(\pe_1\REor\Pnot\pe_1\pe_2,w) = x$.
	Hence, $consume(\pe_1\Por\pe_2,w) = consume(\pe_1\REor\Pnot\pe_1\pe_2,w)$.
\item[2.]\mbox{Case $consume(\pe_1,w) = x$ and $consume(\pe_2,w) = fail$}\\
	This is the same as the case 1.
	\if0
	By the semantics of a prioritized choice, $consume(\pe_1\Por\pe_2,w) = x$.
	Since $consume(\pe_1,w) = x$ and $consume(\Pnot\pe_1,w) = fail$, $consume(\pe_1\REor\Pnot\pe_1\pe_2,w) = x$.
	Hence, $consume(\pe_1\Por\pe_2,w) = consume(\pe_1\REor\Pnot\pe_1\pe_2,w)$.
	\fi
\item[3.]\mbox{Case $consume(\pe_1,w) = fail$ and $consume(\pe_2,w) = x$}\\
	By the semantics of a prioritized choice, $consume(\pe_1\Por\pe_2,w) = x$.
	Since $consume(\pe_1,w) = fail$, $consume(\Pnot\pe_1,w) = \epsilon$ and $consume(\pe_2,w) = x$ , $consume(\pe_1\REor\Pnot\pe_1\pe_2,w) = x$.
	Hence, $consume(\pe_1\Por\pe_2,w) = consume(\pe_1\REor\Pnot\pe_1\pe_2,w)$.
\item[4.]\mbox{Case $consume(\pe_1,w) = fail$ and $consume(\pe_2,w) = fail$}\\
	By the semantics of a prioritized choice, $consume(\pe_1\Por\pe_2,w) = fail$.
	Since $consume(\pe_1,w) = fail$, $consume(\Pnot\pe_1,w) = \epsilon$ and $consume(\pe_2,w) = fail$, $consume(\pe_1\REor\Pnot\pe_1\pe_2,w) = fail$.
	Hence, $consume(\pe_1\Por\pe_2,w) = consume(\pe_1\REor\Pnot\pe_1\pe_2,w)$.
\end{description}
\end{proof}

Secondly, we add new production rules for nonterminals in not-predicates to an LPEG.
We apply this modification to LPEGs, because we consider a nonterminal $A$ in a not-predicate and a nonterminal $A$ that is not in a not-predicate as distinct.
We show the modification in Definition \ref{def:PFG}.

\begin{definition}
\label{def:PFG}
Let ${\GL} = (N_{\GL},\Sigma,P_{\GL},\pe_s)$ be an LPEG.
${\PFG}({\GL}) = (N_{\GL}\cup N_{\GL'},\Sigma,\\P_{{\GL}_1}\cup P_{{\GL}_2},\pe_{s'})$, where
$P_{{\GL}_1} = \{A \leftarrow {\PFtwo}(\pe_{A}) \mid \pe_A \in P_{\GL}\}$,
$P_{{\GL}_2} = \{A' \leftarrow \pe_{A'} \mid \pe_{A'} = {\copyfunc}(\pe_A), \pe_A \in P_{\GL}\}$,
$N_{\GL'} = \{A' \mid \pe_{A'} \in P_{{\GL}_2}\}$,
$\pe_{s'} = {\PFtwo}(\pe_s)$.
\end{definition}

In the modification function, we use an auxiliary function $\PFtwo$.
$\PFtwo$ is a function for modification of a production rule.
We show the definition of $\PFtwo$ in Definition \ref{def:PFtwo}.
In the following definition, we use a function $\copyfunc$.
$\copyfunc(\pe) = \pe'$ denotes that a nonterminal $A$ is renamed as $A'$ if the nonterminal $A$ is not already $A'$ and the other expressions are same.
We assume that there does not exist $A'$ in an LPEG before the modification.

For example, $\copyfunc(a A \REor \Pnot(a A) b \REor (\Pnot b) c) = a A' \REor \Pnot(a A') b \REor (\Pnot b) c$ and $\copyfunc(\copyfunc(\Pnot(a A))) = \copyfunc(\Pnot(a A')) = \Pnot(a A')$.
\begin{definition}
\label{def:PFtwo}
\begin{equation}
\begin{array}{llll}
{\PFtwo}(\nfree) &=& \nfree\\
{\PFtwo}(\nfree\ A)  &=&     \nfree\ A\\
{\PFtwo}(\nfree\ {\pe})  &=&     \nfree\ {\PFtwo}({\pe})\\
{\PFtwo}({\pe}\REor{\pe})  &=&	    {\PFtwo}({\pe})\REor{\PFtwo}({\pe})\\
{\PFtwo}(\Pnot {\pe}\ {\pe})  &=&    \Pnot ({\copyfunc}({\pe}))\ {\PFtwo}({\pe})\\
\end{array}
\notag
\end{equation}
\end{definition}

We show an example of the modification as follows:

\begin{example}
${\GL} = (\{A,B\},\{a,b,c,d\},\{A \leftarrow \Pnot(a A \REor \Pnot(a A) b B \REor \Pnot(b B) c ) d, B \leftarrow a B \REor \Pnot(a B) b\},A)$ is an LPEG.
Then, ${\PFG}({\GL}) = (\{A,B,A',B'\},\{a,b,c,d\},P_{{\GL}_1}\cup P_{{\GL}_2},A)$, where $P_{{\GL}_1}$ consists of the following rules:
\begin{eqnarray*}
	A &\leftarrow& \Pnot(a A' \REor \Pnot(a A') b B' \REor \Pnot(b B') c ) d\\
	B &\leftarrow& a B \REor \Pnot(a B') b
\end{eqnarray*}
$P_{{\GL}_2}$ consists of the following rules:
\begin{eqnarray*}
	A' &\leftarrow& \Pnot(a A' \REor \Pnot(a A') b B' \REor \Pnot(b B') c  ) d \\
	B' &\leftarrow& a B' \REor \Pnot(a B') b 
\end{eqnarray*}
\end{example}

Thirdly, we describe the conversion from modified LPEGs to BFAs with temporary boolean variables.
The foundation of the conversion follows Morihata's work\cite{REwLAtoDFA} for regular expression with positive and negative lookaheads, but we significantly extend his work with handling recursion.

In this function, we assume that the names of boolean variables are distinct in the conversion.
We write a temporary boolean variable of a nonterminal $A$ as $f_{tmp_A}$.
A function $\phi(f_1,f_2,F)$ converts the boolean function $f_1$ by replacing a boolean variable $s$ in $f_1$ with $s \lor f_2$ if $s \in F$.
For example, let $f_1 = (q_1\land q_2)\lor q_3$, $f_2 = q_4$ and $F = \{q_2,q_3\}$, where $q_1$, $q_2$, $q_3$ and $q_4$ are boolean variables.
Then, $\phi(f_1,f_2,F) = (q_1\land ( q_2 \lor q_4 ) ) \lor ( q_3 \lor q_4 )$.
Note that the BFA converted by the following function accepts the full match of the expressions.
Therefore, a BFA that accepts the same language with the LPEG is written as $T(\pe_s\Pany\Pzero)$.

\begin{eqnarray*}\nonumber
	T(\epsilon) &=& (\{s\}, \Sigma, \{\}, s, \{s\}, \{\})\nonumber\\
	T(a) &=& (\{s,t\}, \Sigma, \{((s,a),t)\}, s,\{t\}, \{\})\nonumber\\
	T(\Pnot\pe) &=& (Q \cup \{s\}, \Sigma, \delta\cup\{((t,\Pany),t)\mid t\in F\cup P\}, s\land\overline{f^0}, \{s\}, F \cup P)\nonumber\\
		               &where& (Q, \Sigma, \delta, f^0,F, P) = T(\pe)\\
	T(\pe_1 \pe_2) &=& (Q_1 \cup Q_2, \Sigma, \delta, \phi(f^0_1,f^0_2,F_1), F_2, P_1 \cup P_2)\nonumber\\
				              			            &where& (Q_1, \Sigma, \delta_1, f^0_1, F_1, P_1) = T(\pe_1),\\
							            	&& (Q_2, \Sigma, \delta_2, f^0_2, F_2, P_2) = T(\pe_2)\\
							                    &and&            \delta=  \{ ((s,a),\phi(t,f^0_2,F_1)) \mid ((s,a),t) \in \delta_1 \} \cup\delta_2\nonumber\\
	T(\pe_1 \REor \Pnot\pe_1 \pe_2) &=& (Q_1\cup Q_2, \Sigma, \delta_1 \cup \delta_2,f^0_1 \lor f^0_2,F_1\cup F_2, P_1 \cup P_2)\\
	                                     &where& (Q_1,\Sigma,\delta_1, f^0_1, F_1, P_1) = T(\pe_1)\\
	                                     &and& (Q_2,\Sigma,\delta_2, f^0_2, F_2, P_2) = T(\Pnot\pe_1\pe_2)\\
	T(A) &=&\left\{ \begin{array}{ll}
	    T(P_G(A))\;\; \verb|(first application)|\\
	    (\{\},\Sigma,\{\},f_{tmp_A},\{\},\{\})\;\;  \verb|(otherwise)|\\
	  \end{array} \right.
\nonumber\end{eqnarray*}

The function $T$ handles the nonterminals in the same way as a conversion from a right-linear grammar to an NFA\cite{Linz:2006:IFL:1206580}.
In the conversion from a right-linear grammar to an NFA, a nonterminal is handled as an initial state of the NFA.
In the same way, in the function $T$, a nonterminal is handled as an initial function of the BFA.

Finally, we replace temporary variables with the initial functions of the nonterminals.
We show an example of conversion from an LPEG to a BFA.

\begin{example}
Let $G = (\{A\},\{a,b\},\{A \leftarrow a A \Por b\},A)$ be an LPEG.
The language of the LPEG $L(G) = \{ a^ib \mid i \geq 0 \}$.

First, we rewrite the prioritized choice in the LPEG as follows.
$G = (\{A\},\{a,b\},\{A \leftarrow a A \REor \Pnot(a A) b\},A)$.

Second,y we modify the LPEG $G$ as follows.
$G = \PFG(G)$, where $\PFG(G) = (\{A,A'\},\{a,b\},\{A \leftarrow a A \REor \Pnot(a A') b, A' \leftarrow a A' \REor \Pnot(a A') b\},A)$.

Thirdly, we convert the LPEG $G$ to a BFA $B$ with temporary boolean variables.
As a result of the conversion, we get the BFA $B = (\{q_0,...,q_{13}\},\{a,b\},\delta,q_0\lor((q_{11}\lor q_{12})\land\overline{q_2}),\{q_{13}\},\{q_{10}\})$, where $\delta$ is shown in Table \ref{tab:delta_tmp}.
For simplicity, we consider transitions that are not in Table \ref{tab:delta_tmp} return $false$.

\begin{table}[htb]
\begin{center}
\caption{The transition function $\delta$ with temporary boolean variables}
\label{tab:delta_tmp}
  \begin{tabular}{ccc} \toprule
    state$\backslash$terminal & a & b  \\ \midrule 
    $q_0$ & $q_1\lor f_{tmp_A}$ & $false$  \\
     $q_2$ & $q_3\lor q_4\lor((q_8\lor q_9)\land\overline{q_6})$ & $false$ \\
    $q_4$ & $q_5\lor f_{tmp_{A'}}$ & $false$ \\ 
    $q_6$ & $q_7\lor f_{tmp_{A'}}$ & $false$ \\
    $q_9$ & $false$ & $q_{10}$\\
    $q_{10}$ & $q_{10}$ & $q_{10}$ \\
    $q_{12}$ & $false$ & $q_{13}$ \\ \bottomrule
  \end{tabular}
  \end{center}
\end{table}

Finally, we replace temporary boolean variables with the initial functions.
In this BFA, there are two temporary boolean variables, $f_{tmp_{A}}$ and $f_{tmp_{A'}}$.
$f_{tmp_{A}}$ is replaced by $q_0\lor((q_{11}\lor q_{12})\land\overline{q_2})$.
$f_{tmp_{A'}}$ is replaced by $q_4\lor((q_8\lor q_9)\land\overline{q_6})$.
The transition function $\delta$ is shown in Table \ref{tab:delta}.

\begin{table}[htb]
\begin{center}
\caption{The transition function $\delta$}
\label{tab:delta}
  \begin{tabular}{ccc} \toprule
    state$\backslash$terminal & a & b  \\ \midrule 
    $q_0$ & $q_1\lor {q_0\lor((q_{11}\lor q_{12})\land\overline{q_2})}$ & $false$  \\
     $q_2$ & \shortstack{$q_3\lor q_4\lor((q_8\lor q_9)\land$$\overline{q_6})$} & $false$ \\
    $q_4$ & $q_5\lor q_4\lor((q_8\lor q_9)\land\overline{q_6})$ & $false$ \\ 
    $q_6$ & $q_7\lor q_4\lor((q_8\lor q_9)\land\overline{q_6})$ & $false$ \\
    $q_9$ & $false$ & $q_{10}$\\
    $q_{10}$ & $q_{10}$ & $q_{10}$ \\
    $q_{12}$ & $false$ & $q_{13}$ \\ \bottomrule
  \end{tabular}
  \end{center}
\end{table}

The BFA $B$ accepts an input string $b$.
\begin{eqnarray*}
\delta(q_0\lor((q_{11}\lor q_{12})\land\overline{q_2}),b) &=& false\lor((false\lor q_{13})\land\overline{false})\\
	&=& false\lor((false\lor true)\land\overline{false})\\
	&=& true
\end{eqnarray*}
In the same way, we can check that the BFA $B$ rejects an input string $a$.

\end{example}

\begin{theorem}
\label{theo:lpegtobfa}
Let $G = (N,\Sigma,R,\pe_S)$ be an LPEG modified in Definition \ref{def:PFG}.
Let $B = T(\pe_S\Pany\Pzero)$ and $B$ has already replaced the temporary variables with initial functions.
Then, $L(G) = L(B)$.
\end{theorem}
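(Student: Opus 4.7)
The plan is to apply Theorem \ref{theo:lbeq}, which reduces $L(G) = L(B)$ to establishing the operational equivalence $consume(\pe_S, x) = consume(B, x)$ for every input string $x \in \Sigma^*$. The trailing $\Pany\Pzero$ in $B = T(\pe_S\Pany\Pzero)$ is precisely what translates LPEG prefix-matching into the full-match acceptance that $T$ produces: whenever $consume(\pe_S, x) = y$ for a prefix $y$ of $x$, the $\Pany\Pzero$ tail absorbs the remainder so that the BFA accepts $x$, and conversely any accepting run must split $x$ into a $\pe_S$-consumed prefix followed by $\Pany\Pzero$-consumed leftover.

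The core of the argument is a structural induction on $\pe$ establishing a correspondence lemma: for $T(\pe) = (Q, \Sigma, \delta, f^0, F, P)$, we have $consume(\pe, w) = y$ iff the appropriate $eval_P / eval_F$ evaluation of $\delta(f^0, y)$ returns $true$, and $consume(\pe, w) = fail$ iff no such prefix $y$ exists. The base cases for $\epsilon$ and $a$ are immediate from the definition of $T$. The sequence case $\nfree\,\pe$ depends critically on the $\phi$ function: by rewriting boolean variables in $F_1$ to $s \lor f^0_2$, it arranges the BFA so that whenever the left sub-automaton reaches an accepting state, the right sub-automaton begins, mirroring the operational meaning of sequencing. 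The choice case $\pe_1 \REor \Pnot\pe_1\,\pe_2$ follows from Theorem \ref{theo:orderunorder} together with disjunction of initial functions. The not-predicate case is handled by the initial function $s \land \overline{f^0}$ together with the $\Pany$-self-loops $\{((t,\Pany),t) \mid t \in F \cup P\}$, which together implement ``peek without consuming'' by letting accepting states stay accepting under any further input.

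The main obstacle is the nonterminal case, where recursion is broken by the temporary variables $f_{tmp_A}$ and resolved by post-substitution with the initial functions $f^0_A$ of the corresponding sub-automata. Here I would exploit the right-linear restriction of LPEGs: every nonterminal in the BNF appears as a tail occurrence, so the BFA construction amounts to treating $A$ as ``jump to the initial function of $T(P_G(A))$,'' exactly as a nonterminal is treated as an initial state in the standard right-linear-grammar to NFA construction. The key lemma is that the post-processing substitution yields the least-fixed-point solution of the system of boolean equations induced by the productions, and that this fixed point agrees with the inductive unfolding of $consume$ on nonterminals. Because structural induction on $\pe$ cannot terminate through a recursive nonterminal, I would instead perform a secondary induction on the length of the consumed prefix, noting that every operational step of $consume$ through a nonterminal corresponds to one BFA transition followed by re-entry at a boolean variable representing $A$.

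Finally, the separation of $A$ and $A'$ provided by Definition \ref{def:PFG} would be used to preserve the $F/P$ distinction through the induction: nonterminals occurring inside a not-predicate are redirected to the $A'$-copy whose accepting states flow into $P$ (hence into $eval_P$ only), while nonterminals outside a not-predicate keep their original names and contribute to $F$ (hence to $eval_F$). Chaining the correspondence lemma with the $\Pany\Pzero$ reduction gives the desired $consume(\pe_S, x) = consume(B, x)$ for every $x$, and Theorem \ref{theo:lbeq} concludes $L(G) = L(B)$.
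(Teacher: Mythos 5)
Your proposal follows essentially the same route as the paper: reduce $L(G)=L(B)$ to the operational equivalence $consume(\pe_S,x)=consume(B,x)$ via Theorem~\ref{theo:lbeq}, then argue by structural induction on the expression with exactly the paper's case split ($\epsilon$, $a$, $\Pnot\pe$, $\nfree\,\pe$ via $\phi$, $\pe_1\REor\Pnot\pe_1\pe_2$ via disjunction of initial functions, and the nonterminal case). The one place you genuinely diverge is the nonterminal case, and there your version is the stronger one: the paper simply \emph{assumes} $consume(T(\pe),w)=consume(\pe,w)$ as a blanket hypothesis and disposes of $A$ in two sentences, which is circular precisely because structural induction cannot pass through a recursive nonterminal. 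Your secondary induction on the length of the consumed prefix (each pass through a nonterminal consumes at least one terminal before re-entering the boolean variable for $A$, by right-linearity) is the missing well-founded measure that makes the argument actually terminate; the ``least fixed point'' framing is slightly more machinery than needed, since the post-substitution merely introduces cycles through the transition function rather than a genuine equation system to solve, but the prefix-length induction justifies it either way. Your reading of the $\Pany\Pzero$ tail and of the $A$/$A'$ split tracking the $F$/$P$ distinction matches the paper's intent.
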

\begin{proof}
We show that an LPEG $G$ is equivalent to the BFA $B$ by induction on the structure of an linear parsing expression $\pe$.
We assume that $T(\pe)$ is a BFA such that $consume(T(\pe),w) = consume(\pe,w)$, where $w \in \Sigma^*$.

The basis is as follows:
{\footnotesize\bf\flushleft BASIS}:
\begin{description}	
\item[1.]\mbox{Case $\pe = \epsilon$}\\
	The expression $\epsilon$ does not fail to match any string.
	Thus, we only need to consider the case $consume(\epsilon,w) = \epsilon$.
	$consume(T(\epsilon),w) = \epsilon$ since $T(\epsilon) = (\{s\},\Sigma,\{\},s,\{s\},\{\})$ and $eval_P(\delta(eval_F(\delta(s,\epsilon),F),\epsilon),P) = true$.
	Thus, $consume(T(\epsilon),w) = consume(\epsilon,w)$.
\item[2.]\mbox{Case $\pe = a$}\\
	Let $T(a) = (\{s,t\},\Sigma,\{((s,a),t)\},s,\{t\},\{\})$.
	When $consume(a,w) = a$, $consume(T(a),w) = a$ since $eval_P(\delta(eval_F(\delta(s,a),F),\epsilon),P) = true$.
	When $consume(a,w) = fail$, $consume(T(a),w) = fail$ since the first character of $w$ is not $a$ and $eval_P(\delta(eval_F(\delta(s,b),F),\epsilon),P) = false$, where $b$ is a prefix of $w$.
	Thus, $consume(T(a),w) = consume(a,w)$.
\end{description}

The induction is as follows:
{\footnotesize\bf\flushleft INDUCTION}:
\begin{description}	
\item[1.]\mbox{Case $\pe = \Pnot\pe$}\\
By induction hypothesis, $consume(\pe,w) = consume(T(\pe),w)$.
When $consume(\pe,w) = fail$, $consume(\Pnot\pe,w) = \epsilon$.
$consume(T(\Pnot\pe),w) = eval_P(\delta(eval_F(\delta(s\land\overline{f^0},\epsilon),\{s\}),xy),F\cup P)$
$ = eval_P(\delta(\overline{f^0},xy),F\cup P)$.
Since the transition function $\delta$ has $\{((t,\Pany),t) \mid t\in F\cup P\}$, $eval_P(\delta(f,xy),F\cup P) = eval_P(\delta(eval_F(\delta(f,x),F),y),P)$.
Thus, $eval_P(\delta(\overline{f^0},xy),F\cup P) = eval_P(\delta(eval_F(\delta(\overline{f^0},x),F),y),P) = true$.
Hence, $consume(T(\Pnot\pe),w) = \epsilon$.

When $consume(\pe,w) = x$, $consume(\Pnot\pe,w) = fail$.
When $consume(T(\pe),w) = x$, $eval_P(\delta(eval_F(\delta(f^0,x),F),y),P)  = true$ and
$consume(T(\Pnot\pe),w) = fail$ since $eval_P(\delta(eval_F(\delta(s\land\overline{f^0},x),\{s\}),y),F\cup P) = s'\land\overline{true} = false$, where $s'$ is $true$ or $false$.

Hence $consume(\Pnot\pe,w) = consume(T(\Pnot\pe),w)$.

\item[2.]\mbox{Case $\pe = \pe_1\pe_2$}\\
We can divide the case into three cases: $consume(\pe_1,w) = x$ and $consume(\pe_2,yz) = y$, $consume(\pe_1,w) = fail$, and $consume(\pe_1,w) = x$ and $consume(\pe_2,yz) = fail$.
We show the first case.
When $consume(\pe_1,w) = x$ and $consume(\pe_2,yz) = y$, $consume(\pe_1\pe_2,w) = xy$.
$consume(T(\pe_1\pe_2),w) = xy$ since $eval_P(\delta(eval_F(\delta(\phi(f^0_1,f^0_2,F_1),xy),F_2),z_1),P_1\cup P_2)  = true$.
Note that we do not need to consider about predicates because $consume(\pe_1,w) = x$ and $consume(\pe_2,yz) = y$, that is, predicates in $\pe_1$ and $\pe_2$ succeeds on $w$ and $yz$, respectively.
In the same way, we can confirm that $consume(\pe_1\pe_2,w) = consume(T(\pe_1\pe_2),w)$.

\item[3.]\mbox{Case $\pe = \pe_1\REor\Pnot\pe_1\pe_2$}\\
We can divide the case into three cases: $consume(\pe_1,w) = x$, $consume(\pe_1,w) = fail$ and $consume(\pe_2,w) = x$, and $consume(\pe_1,w) = fail$ and $consume(\pe_2,w) = fail$.
Note that there is no case such as $consume(\pe_1,w) = x$ and $consume(\pe_2,w) = y$ since $\pe_2$ does not match the input string $w$ if $\pe_1$ matches $w$.
We show the first case.
When $consume(\pe_1,w) = x$, $consume(\pe_1\REor\Pnot\pe_1\pe_2,w) = x$.
Let $T(\pe_1) = (Q_1,\Sigma,\delta_1,f^0_1,F_1,P_1)$ and $T(\pe_2) = (Q_2,\Sigma,\delta_2,f^0_2,F_2,P_2)$.
In this case, $consume(T(\pe_1),w) = x$ and $eval_P(\delta(eval_F(\delta(f^0_1,x),F_1),y),P_1)  = true$.
In addition, $consume(T(\pe_2),w) = x'$ and $eval_P(\delta(eval_F(\delta(f^0_2,x'),F_2),y'),P_2)  = true$.
The initial function of $T(\Pnot\pe_1\pe_2)$ is $(s\lor f^0_2)\land\overline{f^0_1}$.
$consume(T(\pe_1\REor\Pnot\pe_1\pe_2),w) = x$ since $eval_P(\delta(eval_F(\delta(f^0_1\lor((s\lor f^0_2)\land\overline{f^0_1}),x),F_2),y),P_1\cup P_2)  = true\lor false = true$.
In the same way, we can confirm that $consume(\pe_1\REor\Pnot\pe_1\pe_2,w) = consume(T(\pe_1\REor\Pnot\pe_1\pe_2),w)$.

\item[4.]\mbox{Case $\pe = A$}\\
At the first application, $consume(T(R(A)),w) = consume(A,w)$ for any string $w \in \Sigma^*$ by the assumption.
Otherwise, $consume(T(R(A)),w) = consume(A,w)$ since the boolean function $f_{tmpA}$ is the initial function of the BFA $T(R(A))$ and there already exist other elements of the BFA.
\end{description}

Hence, the LPEG $G$ is equivalent to the BFA $B$.
Thus, by Theorem \ref{theo:lbeq}, $L(G) = L(B)$.
\end{proof}

\begin{theorem}
\label{theo:lpegtodfa}
For any LPEG ${\GL}$ there exists a DFA ${\D}$ such that $L({\GL}) = L({\D})$.
\end{theorem}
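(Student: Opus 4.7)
The plan is to chain Theorem \ref{theo:lpegtobfa} with the classical result of Brzozowski that every BFA admits an equivalent DFA (\cite{BRZOZOWSKI198019}, Theorem 2). Given an arbitrary LPEG $\GL$, I would first apply the two preprocessing transformations of Section \ref{sec:lpegbfa}: rewrite each prioritized choice $\pe_1 \Por \pe_2$ as $\pe_1 \REor \Pnot\pe_1 \pe_2$, which preserves the language by Theorem \ref{theo:orderunorder}, and then apply the modification $\PFG$ from Definition \ref{def:PFG} to obtain an LPEG $\GL'$ in which nonterminals appearing inside not-predicates have been renamed apart. Because the renaming is just a bookkeeping step that duplicates rules under fresh names while leaving all consume-behaviour unchanged, one gets $L(\GL) = L(\GL')$.

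Next, I would invoke Theorem \ref{theo:lpegtobfa} directly on $\GL'$: form $B = T(\pe_s \Pany\Pzero)$ for the start expression $\pe_s$ of $\GL'$ and then substitute each temporary boolean variable $f_{tmp_A}$ by the initial function produced for the nonterminal $A$. Theorem \ref{theo:lpegtobfa} guarantees $L(\GL') = L(B)$, so $L(\GL) = L(B)$.

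Finally, I would invoke Brzozowski's theorem to obtain a DFA $\D$ with $L(\D) = L(B)$. Transitivity of language equality yields $L(\GL) = L(\D)$, as required. Concretely, the witness DFA is obtained by applying Brzozowski's subset-style construction to $B$, taking as DFA states the reachable boolean functions over $Q$ under $\delta$ starting from the (already substituted) initial function, with acceptance determined by evaluating the reached function under the assignment that sends variables of $F \cup P$ to $\mathit{true}$ and the remaining variables to $\mathit{false}$.

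There is essentially no new mathematical obstacle at this final stage: both the BFA-to-DFA step and the LPEG-to-BFA step have been established independently, and the result follows by composition. The only point deserving mild care is bookkeeping, namely checking that the transformations leading into Theorem \ref{theo:lpegtobfa} are genuinely language-preserving so that the composed chain $\GL \rightsquigarrow \GL' \rightsquigarrow B \rightsquigarrow \D$ really does start and end with the same language; this reduces to the already-proved Theorem \ref{theo:orderunorder} together with the invariant that $\PFG$ preserves consume-behaviour on every input.
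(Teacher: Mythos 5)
Your proposal is correct and follows essentially the same route as the paper: compose Theorem \ref{theo:lpegtobfa} with Brzozowski's result that every BFA has an equivalent DFA. The paper states this in two sentences and leaves implicit the point you rightly make explicit, namely that an arbitrary LPEG must first be put through the language-preserving preprocessing (Theorem \ref{theo:orderunorder} and Definition \ref{def:PFG}) before Theorem \ref{theo:lpegtobfa} applies.
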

\begin{proof}
By Theorem \ref{theo:lpegtobfa}, LPEGs can be converted into BFAs.
BFAs can be converted into DFAs.
\end{proof}

\subsection{From a DFA to an LPEG}
\label{sec:fdtl}
An arbitrary regular expression can be converted into a PEG\cite{Medeiros20143}\cite{Oikawa_convertingregexes}.
In this section, we say that for any DFA ${\D}$ there exists an LPEG ${\GL}$ such that $L(\D) = L({\GL})$.
To prove this, we show that a PEG converted from a regular expression by \cite{Medeiros20143} is an LPEG, since DFAs can be converted into equivalent regular expressions\cite{Hopcroft:2006:IAT:1196416}.

Medeiros et al. studied the conversion and they showed the conversion function as a function $\Pi$\cite{Medeiros20143}.
The definition of the function $\Pi$ is shown in Definition \ref{def:continuation_function}.
The function $\Pi({\re},{\GP})$ takes a regular expression $\re$ and a continuation grammar ${\GP} = (N_{\GP},\Sigma,P_{\GP},\pe_s)$, and returns a PEG.
The continuation grammar is defined by a PEG ${\GP_0} = (\{\},\Sigma,\{\},\Pempty)$ for the first application.
\begin{definition} {\bf ( in \cite{Medeiros20143})}
\label{def:continuation_function}
\begin{eqnarray*}
	\Pi(\epsilon,{\GP}) &=& {\GP}\\
	\Pi(a,{\GP}) &=& (N_{\GP},\Sigma,P_{\GP},a{\pe_s})\\
	\Pi({\re}_1 {\re}_2,{\GP}) &=& \Pi({\re}_1,\Pi({\re}_2,{\GP}))\\
	\Pi({\re}_1 \mid {\re}_2,{\GP}) &=& ({N_{\GP}}'',\Sigma,{P_{\GP}}'',{\pe_s}'\Por{\pe_s}'')\\
						    &where&  ({N_{\GP}}'',\Sigma,{P_{\GP}}'',{\pe_s}'') = \Pi(\re_{2},({N_{\GP}}',\Sigma,{P_{\GP}}',\pe_{s}))\\
						    &and& ({N_{\GP}}',\Sigma,{P_{\GP}}',{\pe_s}') = \Pi(\re_{1},{\GP})\\
	\Pi({\re}^{\star},{\GP}) &=& ({N_{\GP}}',\Sigma,{P_{\GP}}'\cup\{A\leftarrow {\pe_s}'\Por\pe_{s}\},A)\ with\ A \notin N_{\GP}\\
	   			  &and& ({N_{\GP}}',\Sigma,{P_{\GP}}',{\pe_s}') = \Pi(\re,(N_{\GP}\cup\{A\},\Sigma,P_{\GP},A))
\end{eqnarray*}
\end{definition}

\begin{theorem}
\label{theo:preregexlpeg}
Let $\re$ be a regular expression and $\Pi(\re,{\GP_0}) = {\GP}$.
The PEG ${\GP}$ is an LPEG.
\end{theorem}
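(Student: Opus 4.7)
The plan is to proceed by structural induction on $\re$, but with a strengthened hypothesis: if $\GP=(N_\GP,\Sigma,P_\GP,\pe_s)$ is an LPEG (meaning every rule in $P_\GP$ has an LPEG right-hand side \emph{and} $\pe_s$ is itself a linear parsing expression), then $\Pi(\re,\GP)$ is an LPEG. The statement of the theorem then follows because the initial continuation grammar $\GP_0=(\{\},\Sigma,\{\},\epsilon)$ trivially satisfies this (its start expression $\epsilon$ is an n-free expression, and there are no rules to check).

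The base cases are immediate from the LPEG syntax in Fig.~\ref{fig:lpegsyntax}. For $\re=\epsilon$ we have $\Pi(\epsilon,\GP)=\GP$, which is an LPEG by assumption. For $\re=a$ the only change is that the start expression becomes $a\,\pe_s$; since $a$ is an n-free expression and $\pe_s$ is a linear parsing expression by hypothesis, $a\,\pe_s$ matches the production $\pe ::= \nfree\ \pe$ and is therefore linear, while the rule set $P_\GP$ is untouched.

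For the inductive cases, concatenation $\re_1\re_2$ is handled by two nested applications of the hypothesis: $\Pi(\re_2,\GP)$ is an LPEG by IH, and then $\Pi(\re_1,\Pi(\re_2,\GP))$ is an LPEG by a second application of IH. Alternation $\re_1\mid\re_2$ is handled by applying IH twice in sequence (first with $\GP$, then with the resulting grammar as the new continuation), obtaining two linear start expressions ${\pe_s}'$ and ${\pe_s}''$; their prioritized choice ${\pe_s}'\Por{\pe_s}''$ then fits the production $\pe ::= \pe\Por\pe$, and the accumulated rule set is an LPEG rule set because each application of IH preserves that invariant.

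The star case is the delicate one, and the step I expect to require the most care. Here we first form the auxiliary continuation grammar $\GP^\ast=(N_\GP\cup\{A\},\Sigma,P_\GP,A)$, whose start expression is the bare nonterminal $A$; I must observe that $A$ qualifies as a linear parsing expression, since the production $\pe ::= \nfree\ A$ together with $\nfree ::= \epsilon$ lets us read $A$ as $\epsilon\,A$. With that, $\GP^\ast$ satisfies the strengthened hypothesis, so IH yields that $\Pi(\re,\GP^\ast)=(N_\GP',\Sigma,P_\GP',{\pe_s}')$ is an LPEG. The final grammar then adds the rule $A\leftarrow {\pe_s}'\Por\pe_s$ and sets $A$ as the new start. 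Both ${\pe_s}'$ (by IH) and $\pe_s$ (by the original assumption on $\GP$) are linear, so their choice is a linear expression and the added rule is a valid LPEG rule; the new start expression $A$ is linear as noted above. Combining these observations, the resulting grammar is an LPEG, closing the induction.
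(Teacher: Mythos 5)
Your proposal is correct and follows essentially the same route as the paper: structural induction on $\re$ with the hypothesis that $\Pi$ preserves LPEG-ness of the continuation grammar, the same case analysis, and the same key observation in the star case that the bare nonterminal $A$ is linear because it can be read as $\epsilon\,A$ matching $\nfree\ A$. Your version is somewhat more careful in spelling out the invariant (both the rule set and the start expression must be linear), but the argument is the same.
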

\begin{proof}
We assume that if $\GP$ is an LPEG, then $\Pi(\re,{\GP})$ is also an LPEG.
For any regular expression $\re$, we check whether the assumption is correct.
If so, $\Pi(\re,{\GP_0})$ is an LPEG since $\GP_0$ is obviously an LPEG.

\begin{description}
	\item[1.]\mbox{Case $\re = \epsilon$}\\
	By induction hypothesis, $\GP$ is an LPEG.
	\item[2.]\mbox{Case $\re = a$}\\
	By induction hypothesis, $\pe_s$ is a linear parsing expression.
	Since $a\pe_s = \nfree\pe$, $(N_{\GP},\Sigma,P_{\GP},a\pe_s)$ is an LPEG.
	\item[3.]\mbox{Case $\re = \re_1\re_2$}\\
	Since $\GP$ is an LPEG, $\Pi(\re_2,\GP)$ is an LPEG.
	Therefore, $\Pi(\re_1,\Pi(\re_2,\GP))$ is also an LPEG.	
	\item[4.]\mbox{Case $\re = \re_1\mid\re_2$}\\
	$\Pi(\re_1,\GP)$ is an LPEG.
	Since $\pe_s$ is a linear parsing expression, $\Pi(\re_2,({N_{\GP}}',\Sigma,\\{P_{\GP}}',\pe_{s}))$ is also an LPEG.
	Therefore, ${\pe_s}'$ and ${\pe_s}''$ are a linear parsing expression.
	Since ${\pe_s}'\Por{\pe_s}'' = \pe\Por\pe$, $({N_{\GP}}'',\Sigma,{P_{\GP}}'',{\pe_s}'\Por{\pe_s}'')$ is an LPEG.
	\item[5.]\mbox{Case $\re = \re^{\star}$}\\
	Since a nonterminal $A ( A = \nfree A)$ is a linear parsing expression, $(N_{\GP}\cup\{A\},\Sigma,\\P_{\GP},A)$ is an LPEG and $\Pi(r,(N_{\GP}\cup\{A\},\Sigma,P_{\GP},A))$ is also an LPEG.
	Since ${\pe_s}'\Por{\pe_s} = \pe\Por\pe$, $({N_{\GP}}',\Sigma,P_{\GP}'\cup\{A\leftarrow {\pe_s}'\Por\pe_{s}\},A)$ is an LPEG.
\end{description}

Hence, $\Pi(\re,{\GP_0})$ is an LPEG.

\end{proof}

\begin{theorem}
\label{theo:regexlpeg}
For any DFA ${\D}$ there exists an LPEG ${\GL}$ such that $L({\D}) = L({\GL})$.
\end{theorem}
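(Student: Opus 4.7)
The plan is to obtain $\GL$ as a two-stage composition: first convert the DFA into an equivalent regular expression, then apply Medeiros et al.'s continuation-based conversion $\Pi$ already introduced in Definition \ref{def:continuation_function} to land inside the LPEG fragment. The key observation is that Theorem \ref{theo:preregexlpeg} already does most of the work, so this result should drop out as a straightforward corollary rather than require any new construction.

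Concretely, I would proceed as follows. Given an arbitrary DFA $\D$, invoke the classical result (e.g.\ via state elimination, as in \cite{Hopcroft:2006:IAT:1196416}, which is already cited) to produce a regular expression $\re$ with $L(\D) = L(\re)$. Next, form the PEG $\GP = \Pi(\re, \GP_0)$ where $\GP_0 = (\{\},\Sigma,\{\},\Pempty)$ is the initial continuation grammar from Definition \ref{def:continuation_function}. The correctness of $\Pi$ established by Medeiros et al.\ \cite{Medeiros20143} gives $L(\GP) = L(\re)$, so chaining the two equalities yields $L(\GP) = L(\D)$. Finally, applying Theorem \ref{theo:preregexlpeg} to $\re$ tells us that $\GP$ is in fact an LPEG, so setting $\GL = \GP$ completes the proof.

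There is no real obstacle here; the theorem is essentially a packaging of earlier results, and the argument is a three-line chain: DFA $\to$ regex (classical), regex $\to$ PEG preserving language (Medeiros et al.), PEG is in the LPEG fragment (Theorem \ref{theo:preregexlpeg}). The only thing worth being careful about is citing the semantic correctness of $\Pi$ from Medeiros et al.\ explicitly, since Theorem \ref{theo:preregexlpeg} as stated only addresses the syntactic shape of the output grammar and not its language. Once that citation is in place, the rest of the proof is a one-sentence chain of equalities.
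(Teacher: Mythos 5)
Your proof takes essentially the same route as the paper: convert the DFA to a regular expression, apply $\Pi$, and invoke Theorem \ref{theo:preregexlpeg} to conclude the result is an LPEG. Your added remark that the language-preservation of $\Pi$ must be cited from Medeiros et al.\ (since Theorem \ref{theo:preregexlpeg} only concerns syntactic shape) is a point the paper's own two-line proof leaves implicit, so your version is if anything slightly more careful.
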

\begin{proof}
A DFA ${\D}$ can be converted into a regular expression $\re$.
By Theorem \ref{theo:preregexlpeg}, $\re$ can be converted into an LPEG.
\end{proof}

Consequently, we derive the following theorem.

\begin{theorem}
\label{theo:lpegeqregex}
LPEGs are a class that is equivalent to DFAs.
\end{theorem}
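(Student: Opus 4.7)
The plan is to derive this final theorem as an immediate corollary of the two inclusions already established in this section. The statement that LPEGs form a class equivalent to DFAs means that the language class $\{L(G) \mid G \text{ is an LPEG}\}$ coincides with the class of regular languages $\{L(D) \mid D \text{ is a DFA}\}$, so I would split the argument cleanly into the two set inclusions and invoke the previously proved theorems for each.

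For the inclusion of LPEG-languages into DFA-languages, I would simply cite Theorem~\ref{theo:lpegtodfa}: given any LPEG $\GL$, that theorem produces a DFA $\D$ with $L(\GL) = L(\D)$, built along the route LPEG $\to$ BFA $\to$ DFA. The BFA construction of Section~\ref{sec:lpegbfa} handles not-predicates through Boolean negation in the initial function and handles recursion via the temporary-variable-then-substitute trick in the clause for $T(A)$, and the BFA-to-DFA step is Brzozowski's classical result.

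For the reverse inclusion, I would invoke Theorem~\ref{theo:regexlpeg}: given any DFA $\D$, Kleene's theorem converts it into an equivalent regular expression $\re$, and then the continuation-based translation $\Pi$ of Medeiros et al.\ yields a PEG which, by Theorem~\ref{theo:preregexlpeg}, already satisfies the syntactic restrictions of an LPEG. Composing these two steps gives an LPEG $\GL$ with $L(\D) = L(\GL)$.

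Once both inclusions are in hand, the result follows by combining them. I do not expect any real obstacle at this stage, since the genuine difficulty has been front-loaded into Theorems~\ref{theo:lpegtobfa} and~\ref{theo:preregexlpeg}. In particular, the subtle work was faithfully translating the semantics of not-predicates using $eval_F$ and $eval_P$ and justifying the rewriting $\pe_1 \Por \pe_2 \equiv \pe_1 \mid \Pnot \pe_1\, \pe_2$ via Theorem~\ref{theo:orderunorder}; after that machinery is in place, the present theorem is little more than bookkeeping that records the two conversions already constructed.
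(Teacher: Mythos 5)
Your proposal matches the paper's own proof exactly: both directions are obtained by citing Theorem~\ref{theo:lpegtodfa} for the LPEG-to-DFA inclusion and Theorem~\ref{theo:regexlpeg} for the converse, and the theorem follows immediately. The additional commentary on where the real work lies is accurate but not needed for the argument itself.
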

\begin{proof}
By Theorem \ref{theo:lpegtodfa}, for any LPEG ${\GL}$ there exists a DFA ${\D}$ such that $L(\GL) = L(\D)$.
In addition, by Theorem \ref{theo:regexlpeg}, for any DFA ${\D}$ there exists an LPEG $\GL$ such that $L(\D) = L(\GL)$.
Hence, LPEGs are a class that is equivalent to DFAs.
\end{proof}

\section{Related Work}
Noam Chomsky proposed a hierarchy of formal language in order to formalize English grammar\cite{citeulike:158531}\cite{CHOMSKY1959137}.
All types of grammars in the hierarchy are currently used to describe programming language syntax.

Birman and Ullman showed formalism of recognition schemes as TS and gTS\cite{Birman:1970:TRS:905340}\cite{BIRMAN19731}.
TS and gTS were introduced in \cite{Aho:1972:TPT:578789} as TDPL and GTDPL, respectively.
A PEG is a development of GTDPL and can recognize highly nested languages such as $\{a^n b^n c^n \mid n > 0$, which is not possible in a CFG.
In this paper, we showed a subclass of PEGs that is equivalent to DFAs, which would lead to more optimized PEG-based parser generator such as \cite{Kuramitsu:2016:NPO:2986012.2986019}\cite{Grimm:2006:BET:1133255.1133987}.

Morihata showed a translation of regular expression with positive and negative lookaheads into finite state automata\cite{REwLAtoDFA}.
He used a {\em boolean finite automata} (BFAs)\cite{BRZOZOWSKI198019}, that is, {\em alternating finite automata}\cite{Chandra:1981:ALT:322234.322243}\cite{doi:10.1080/00207169008803893}, to represent positive and negative lookaheads of regular expressions as finite automata.
In this paper, we showed a translation from LPEGs to DFA and the translation is based on the Morihata's translation.

\section{Conclusion}
In this study, we formalized a subclass of PEGs that is equivalent to DFAs.
In the process of proving the equivalence of the class and DFAs, we showed the conversion from LPEGs into BFAs.
Since BFAs can be converted into DFAs, we can convert these LPEGs into DFAs.

One of our motivations is to achieve speed up of runtime by processing a part of a PEG such that the the part is regular by using DFAs.
To achieve this, we have to check whether the part of a PEG is regular.
However, this is undecidable.
On the other hand, it is decidable whether a PEG is an LPEG.
Thus, we can check whether the part of a PEG is an LPEG and convert the part into DFAs.
Since DFAs eliminate backtracking, it would lead to further optimizations of the parser generator.

As a future study, we aim to propose an algorithm for detecting a part of a PEG such that backtracking becomes necessary.

\bibliographystyle{fundam}
\bibliography{main}
\end{document}